\def\fps@figure{htbp}
\let\@@citation@@=\citation
\renewcommand{\citation}[1]{\@@citation@@{#1}%
\@for\@tempa:=#1\do{\@ifundefined{cit@\@tempa}%
  {\global\@namedef{cit@\@tempa}{}}{}}%
}
\def\@lbibitem[#1]#2#3\par{%
  \@ifundefined{cit@#2}{}{\@skiphyperreftrue
  \H@item[%
    \ifx\Hy@raisedlink\@empty
      \hyper@anchorstart{cite.#2\@extra@b@citeb}%
        \@BIBLABEL{#1}%
      \hyper@anchorend
    \else
      \Hy@raisedlink{%
        \hyper@anchorstart{cite.#2\@extra@b@citeb}\hyper@anchorend
      }%
      \@BIBLABEL{#1}%
    \fi
    \hfill
  ]%
  \@skiphyperreffalse}%
  \if@filesw
    \begingroup
      \let\protect\noexpand
      \immediate\write\@auxout{%
        \string\bibcite{#2}{#1}%
      }%
    \endgroup
  \fi
  \ignorespaces
  \@ifundefined{cit@#2}{}{#3}}
\def\@bibitem#1#2\par{%
  \@ifundefined{cit@#1}{}{\@skiphyperreftrue\H@item\@skiphyperreffalse
  \Hy@raisedlink{%
    \hyper@anchorstart{cite.#1\@extra@b@citeb}\relax\hyper@anchorend
    }}%
  \if@filesw
    \begingroup
      \let\protect\noexpand
      \immediate\write\@auxout{%
        \string\bibcite{#1}{\the\value{\@listctr}}%
      }%
    \endgroup
  \fi
  \ignorespaces
  \@ifundefined{cit@#1}{}{#2}}
\newtheorem{thm}{Theorem}
\newtheorem{lem}[thm]{Lemma}
\newtheorem{claim}[thm]{Claim}
\newtheorem{obs}[thm]{Observation}
\def\P{\mbox{\ensuremath{\mathcal P}}\xspace}
\DeclareMathSymbol{\lsb@l}{\mathalpha}{letters}{`l}
\begin{document}

\title{Complexity of Domination in Triangulated Plane Graphs}
\author{D\"om\"ot\"or P\'alv\"olgyi
\footnote{Institute of Mathematics, E\"otv\"os Lor\'and University (ELTE), Budapest, Hungary
Research supported by the Marie Sk\l odowska-Curie action of the European Commission, under grant IF 660400, and by the Lend\"ulet program of the Hungarian Academy of Sciences (MTA), under grant number LP2017-19/2017.}
}
\maketitle

\begin{abstract}
We prove that for a triangulated plane graph it is {\sc NP}-complete to determine its domination number and its power domination number.
\end{abstract}

\section{Introduction}

Given a graph $G=(V,E)$, for a set of vertices $S$, denote by $\Gamma(S)$ the closed neighborhood of $S$, i.e.,
\[\Gamma(S)=S\cup \{v\in V\mid \exists \; s\in S \textit{ such that } (v,s) \in E\}.\]

$S$ is called a {\em dominating set} if $V=\Gamma(S)$, i.e., every vertex from $V\setminus S$ has a neighbor in $S$.
The size of the smallest dominating set is called the {\em domination number} of $G$ and is denoted by $\gamma(G)$.
A simple graph embedded in the plane without crossing edges is called a {\em triangulated plane graph} if each of its faces (including the other face) is {\em triangular}, i.e., its boundary consists of three edges.
We emphasize that in this paper we only consider {\em simple} graphs, i.e., multiple edges are not allowed.
Garey and Johnson \cite{GJ} have proved that it is {\sc NP}-hard to determine $\gamma(G)$, already for planar graphs.
We extend this result to triangulated planar graphs.

\begin{thm}\label{thm:dom} For a triangulated plane graph $G$ and integer $n$, it is {\sc NP}-complete to determine its domination number, that is, to decide whether $\gamma(G)\le n$.
\end{thm}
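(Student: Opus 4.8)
The plan is to reduce from a known NP-hard problem via a gadget construction that produces a triangulated plane graph. Since Garey and Johnson established NP-hardness of domination for planar graphs, the natural starting point is to reduce from Planar Dominating Set, or from a suitable restricted variant (such as domination on planar graphs of bounded degree, or Planar 3-SAT / Vertex Cover on planar graphs). The core difficulty is that an arbitrary planar graph is far from triangulated: it has large faces, possibly low connectivity, and vertices whose neighborhoods we must control. Simply adding edges to triangulate the faces would change adjacencies and hence corrupt the domination structure, so the reduction must be engineered so that the \emph{added} structure does not give the solver a cheaper way to dominate.

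First I would fix an instance $(G, k)$ of Planar Dominating Set and fix a plane embedding of $G$ with its face structure. The goal is to build a triangulated plane graph $G'$ together with an integer $n$ so that $\gamma(G') \le n$ if and only if $\gamma(G) \le k$. The key device is a \emph{face gadget}: for each non-triangular face $f$ of $G$, I would insert new vertices inside $f$ and connect them so that (i) the face becomes triangulated, and (ii) each newly inserted vertex is cheap to dominate locally but is \emph{useless} for dominating the original vertices of $G$. A standard trick is to attach to every vertex that we wish to force into a dominating set a small pendant gadget (for instance a pair of degree-two vertices, or a triangle with a forced private neighbor) so that an optimal dominating set is compelled to pick a canonical vertex; the extra cost is a fixed amount per gadget, which I then absorb into the target value $n = k + (\text{number of gadgets})$.

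The crucial step, and the main obstacle, is proving the backward direction: that any dominating set of $G'$ of size $\le n$ can be converted, without increasing its size, into one that dominates the original vertices of $G$ using only original vertices, thereby yielding a dominating set of $G$ of size $\le k$. This requires a \emph{normalization} argument showing that the inserted triangulating vertices can always be ``pushed back'' onto vertices of $G$ (or onto the forced gadget vertices whose count is already accounted for) without losing domination. The gadgets must therefore be designed so that every inserted vertex has the property that whatever it dominates is also dominated by some original vertex that we can afford to include; equivalently, the inserted vertices should only ever be helpful for dominating \emph{themselves} and their gadget siblings, never for covering an original vertex more cheaply than an original vertex would. Establishing this exchange property while simultaneously maintaining full triangulation of every face (including the outer face) is where the technical care concentrates.

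Finally I would verify the routine components: that $G'$ is simple (no multi-edges are introduced when triangulating adjacent faces or near low-degree vertices — this is exactly why the paper emphasizes simplicity), that $G'$ is planar and fully triangulated including the outer face, that the construction is polynomial in the size of $G$, and that membership in NP is immediate since a dominating set of size $\le n$ is a polynomial-size certificate checkable in polynomial time. Combining the forward direction (an optimal dominating set of $G$ extends to one of $G'$ of size $n$ by adding the forced gadget vertices) with the normalized backward direction gives the equivalence $\gamma(G') \le n \iff \gamma(G) \le k$, completing the NP-completeness proof for Theorem~\ref{thm:dom}.
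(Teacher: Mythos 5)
Your proposal is a high-level plan rather than a proof, and its central step is missing: you never exhibit the face gadgets, and the exchange property you require of them --- that every inserted vertex ``should only ever be helpful for dominating itself and its gadget siblings, never for covering an original vertex'' --- cannot hold in the strong form you state. To triangulate a face of the original plane graph you must create triangles containing the original boundary edges; the apex of each such triangle is adjacent to two original boundary vertices, so inserted vertices that dominate original vertices are unavoidable (unless you triangulate purely with chords among original vertices, which changes exactly the adjacencies you are trying to preserve). Moreover, starting from an arbitrary \textsc{Planar Dominating Set} instance gives you no control over face sizes or over which vertices an optimal solution uses, so the normalization argument (``push inserted vertices back onto original vertices without increasing the size'') has no structure to latch onto. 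Since the gadgets are unspecified and the exchange lemma unproved, the reduction is not established; this is precisely the part of the theorem that carries all the difficulty.

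For comparison, the paper sidesteps these obstacles by reducing from \textsc{Planar Monotone 3-sat} rather than from \textsc{Planar Dominating Set}. Each variable contributes a $K_4$ on $v_i,\bar v_i,u_i,w_i$ in which $w_i$ has no other neighbors, so any dominating set must contain one vertex per variable gadget and $\gamma(G)\ge n$ is forced with $n$ equal to the number of variables --- no additive gadget counting or pushing-back argument is needed. Each clause contributes a single vertex $z_h$ adjacent only to its literal vertices, and the equivalence ``$\gamma(G)=n$ iff $\Psi$ is satisfiable'' follows in a few lines. The only property that must survive triangulation is that the $z_h$'s acquire no new neighbors, and the paper proves a separate local lemma (Lemma~\ref{lem:tri}) giving checkable conditions under which a plane graph can be triangulated without adding any edge incident to a prescribed set $Z$; the rigid geometry of the \textsc{Planar Monotone 3-sat} embedding (plus one auxiliary vertex per two-literal clause) guarantees those conditions via Observation~\ref{obs}. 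To salvage your route you would essentially have to rediscover such a ``protected triangulation'' lemma together with a source problem rigid enough to satisfy its hypotheses --- which is the actual content of the paper's proof that your outline defers.
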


Our method also works for the related parameter called {\em power domination number}.
This problem originates from monitoring electrical networks with so-called Phasor Measurement Units; it was first formulated for graphs by Haynes et al. \cite{HHHH}, but we use the (somewhat different) definition given by Brueni and Heath \cite{BH}.
Given a graph $G=(V,E)$, a set of vertices $S$, let $S_1$ be the subset of vertices from $S$ that have exactly one neighbor outside $S$, i.e.,
\[S_1=\{s\in S\mid \exists! \; v\in V\setminus S \textit{ such that } (s,v) \in E\}.\]
The vertices of $S_1$ can {\em propagate} to their neighbors, so we define
\[\Gamma_1(S)=S\cup \Gamma(S_1).\]
The {\em power domination process} starts from any set of vertices $S$, in the first steps applies the $\Gamma$ operator, and then in each following step the $\Gamma_1$ operator, until $\Gamma_1$ stops increasing the size of the set (which happens after finitely many steps in a finite graph).
The set of vertices obtained this way is denoted by
\[\Gamma_P(S)=\Gamma_1(\ldots\Gamma_1(\Gamma(S))\ldots).\]

If $V=\Gamma_P(S)$, then we say that $S$ is a {\em power dominating set} and the size of the smallest such set is the {\em power domination number}, $\gamma_P(G)$, of the graph $G$.
Brueni and Heath \cite{BH} have proved that it is {\sc NP}-hard to determine $\gamma_P(G)$, already for planar graphs.
We extend this result to triangulated planar graphs.

\begin{thm}\label{thm:pdom} For a triangulated plane graph $G$ and integer $n$, it is {\sc NP}-complete to determine its power domination number, that is, to decide whether $\gamma_P(G)\le n$.
\end{thm}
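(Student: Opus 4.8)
The plan is to prove membership in {\sc NP} and then {\sc NP}-hardness by a reduction from power domination on planar graphs, whose hardness is the result of Brueni and Heath \cite{BH} quoted above.

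Membership is the easy part. Given a candidate set $S$ with $|S|\le n$, the power domination process producing $\Gamma_P(S)$ is deterministic: after the initial application of $\Gamma$, each application of $\Gamma_1$ either adds at least one new vertex or leaves the set unchanged, so the process stabilizes after at most $|V|$ rounds. Each round is computable in polynomial time, since for every $s$ in the current set we only have to count its neighbors outside the set. Hence we can test in polynomial time whether $\Gamma_P(S)=V$, and a power dominating set of size at most $n$ is a polynomial certificate. Thus deciding $\gamma_P(G)\le n$ is in {\sc NP}.

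For hardness I would fix a plane embedding of a planar instance $G$ and build a triangulated plane graph $G'$ together with a bound $n'$ so that $\gamma_P(G)\le n \iff \gamma_P(G')\le n'$. The idea is to triangulate every face of $G$ (including the outer face) by inserting, in each face $f$ with boundary walk $u_1,\dots,u_{d_f}$, a local \emph{triangulation gadget}: a new apex vertex joined to all boundary vertices together with the chords needed to cut the resulting wheel into triangles, preceded by a preprocessing step (for instance, making $G$ $2$-connected by a power-domination-preserving operation) that guarantees each face is bounded by a simple cycle. Each gadget should be designed so that it either forces exactly a prescribed number $c_f$ of additional power-dominators or can be absorbed by propagation at no extra cost, and the bound is set to $n'=n+\sum_f c_f$. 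One then argues that an optimal power dominating set of $G'$ may be assumed to place its gadget-vertices canonically, so that deleting them recovers a power dominating set of $G$ of size at most $n$, while conversely any power dominating set of $G$ extends to one of $G'$ by adding the $c_f$ canonical gadget-vertices per face.

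The crux --- and the step I expect to be the main obstacle --- is controlling the propagation operator $\Gamma_1$ under triangulation. Unlike ordinary domination, power domination spreads along the graph through vertices that have \emph{exactly one} neighbor outside the current set, so two opposing risks must be ruled out simultaneously. On one hand, the many chords and apex vertices added to triangulate increase degrees, which can \emph{block} propagation that the original reduction relied upon; on the other hand, the new short connections could \emph{create} unintended propagation paths, or let a few cleverly placed vertices power-dominate a whole gadget, collapsing the careful count and making $G'$ cheaper than $G$. The gadget therefore has to be engineered so that, for every reachable state of the process, each triangulation vertex either is pulled into the set by the initial $\Gamma$ step (so that it neither blocks nor is blocked) or has degree high enough that it can never fire, and so that no apex vertex ever acquires a unique outside neighbor that would let it propagate across the gadget. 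Verifying this invariant through all stages of the global, non-local power domination process, and checking that it survives the preprocessing that enforces $2$-connectivity, is where the real work lies; the two-directional equivalence $\gamma_P(G)\le n \iff \gamma_P(G')\le n'$ then follows from the canonical-placement exchange argument sketched above.
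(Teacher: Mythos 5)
Your {\sc NP}-membership argument is fine, but the hardness half has a genuine gap: the reduction rests entirely on a ``triangulation gadget'' that is never constructed, and the difficulties you correctly identify in your last paragraph are not incidental details to be checked later --- they are the whole problem, and it is far from clear that such a gadget exists for an \emph{arbitrary} planar instance of power domination. Concretely: an apex vertex joined to all $d_f$ boundary vertices of a face will, at the moment the process has absorbed all but one vertex of that boundary, have exactly one neighbor outside the current set and will fire, propagating across the face in a way the original instance never allowed; conversely, every original vertex on the boundary of $k$ faces gains at least $k$ apex neighbors plus chords, which can permanently prevent it from ever having a unique outside neighbor and thus destroy propagation steps that the Brueni--Heath instances depend on. Because $\Gamma_1$ is a global, state-dependent process, you cannot certify ``no extra cost / no blocking'' for a black-box gadget glued onto an unknown graph; you would have to re-analyze the propagation of the specific hard instances, at which point you are no longer doing a generic reduction. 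The additive bound $n'=n+\sum_f c_f$ also presupposes that the optimal solution of $G'$ decomposes cleanly into an optimal solution of $G$ plus canonical gadget vertices, which is exactly the exchange argument that fails when gadgets can substitute for, or be absorbed by, original vertices.

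The paper sidesteps all of this by \emph{not} triangulating an arbitrary instance: it reduces directly from {\sc Planar Monotone 3-sat}, building a planar graph from bespoke variable gadgets (sextuples) and clause gadgets (triples or quadruples of degree-$4$ vertices) whose propagation behavior is verified by hand, and --- crucially --- the graph is designed so that it satisfies the hypotheses of a technical lemma (Lemma~\ref{lem:tri}) guaranteeing it can be triangulated by adding only edges \emph{not incident to} the sensitive clause vertices. The triangulation therefore provably does not change the degrees of the vertices whose ``exactly one outside neighbor'' condition drives the argument, so the power domination analysis survives verbatim. If you want to salvage your approach, you would need to either exhibit and fully analyze a gadget with the invariants you describe, or (more realistically) follow the paper's strategy of designing the hard instance so that triangulation is harmless by construction.
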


In fact, our construction will be such that either there is an $S$ with $|S|=n$ such that already $V=\Gamma_1(\Gamma(S))$, or $\gamma_P(G)> n$.

For more related literature and background, see the recent works \cite{Aa,DGP}.

\section{Technical claims}

Our reductions are from the {\sc Planar Monotone 3-sat} problem, which was defined and shown to be {\sc NP}-complete in \cite{BK12}.
In this problem the goal is to decide the satisfiability of a conjunctive normal form (CNF), where each clause contains at most $3$ literals, all of which are either negated, or all unnegated, along with a planar embedding of the incidence structure in the following way. (See Figure \ref{fig:pm3sat}.)

\begin{figure}
\begin{center}
\includegraphics[width=.8\textwidth]{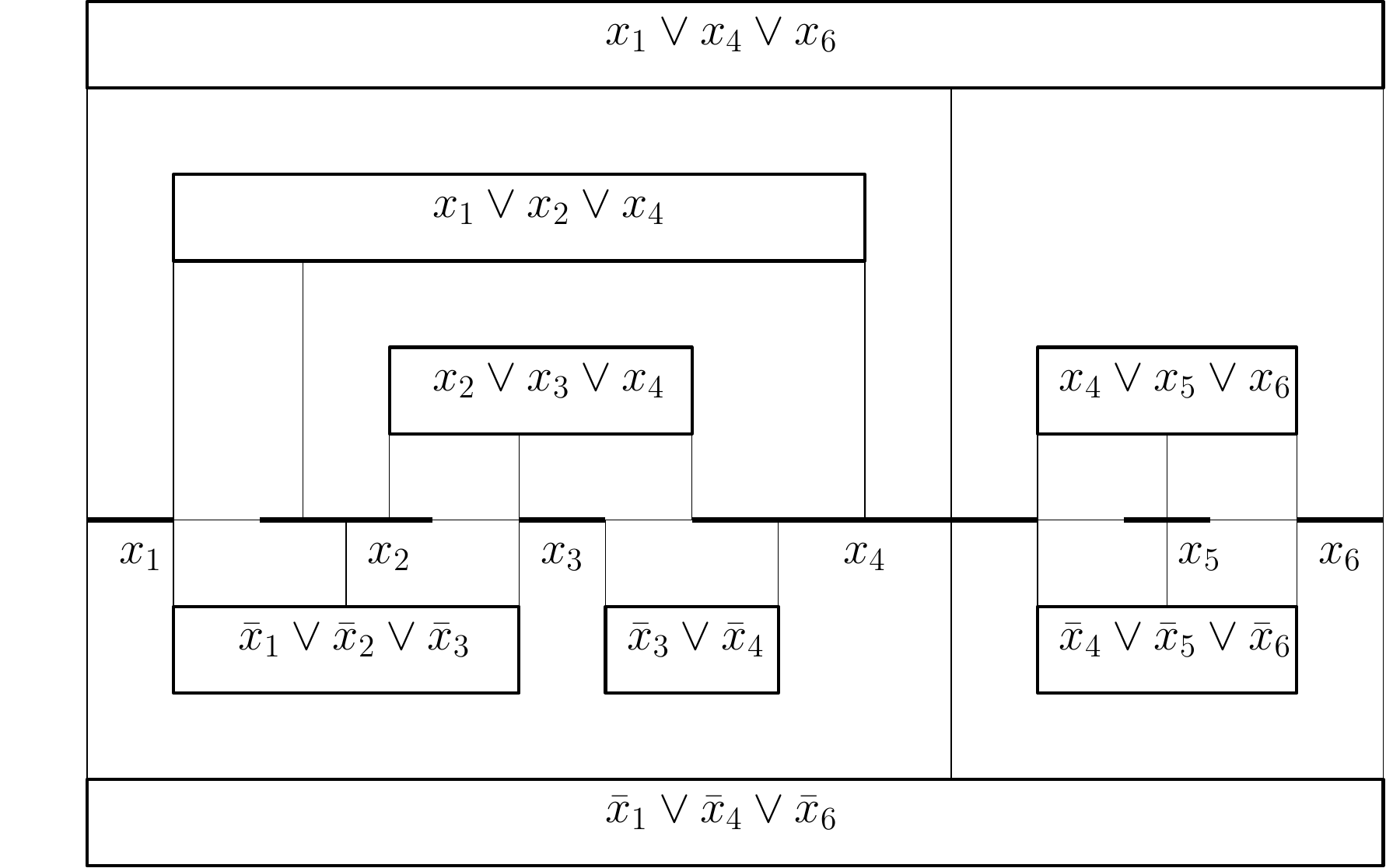}
\caption{Example of a {\sc Planar Monotone 3-sat} input. A satisfying assignment: only $x_4$ is true.}
\label{fig:pm3sat}
\end{center}
\end{figure}

\begin{itemize}
\item Each variable corresponds to an interval in the horizontal line $y=0$; these intervals are pairwise disjoint.
\item Each clause corresponds to an axis-parallel rectangle; these rectangles are pairwise disjoint.
\item If a clause contains only negated (resp.\ unnegated) variables, then its rectangle is entirely contained in the $y<0$ (resp.\ $y>0$) halfplane.
\item Every rectangle is connected to (the intervals corresponding to) the variables contained in (the clause corresponding to) it by a vertical segment, which does not pass through any other rectangles.
\end{itemize}

Note that clauses containing less than $3$ literals are also allowed; 
we are not aware of whether the problem remains {\sc NP}-complete or not if we require that every clause contains exactly $3$ literals (this would slightly simplify our proof).
Note that without requiring monotonicity (and any other special structure) {\sc Planar Exact 3-sat} is {\sc NP}-complete \cite{M83}, even if the planar incidence graph is vertex $3$-connected \cite{K94}.
In our case, however, it seems more likely that the problem always becomes solvable.
This would also follow from a conjecture of Goddard and Henning \cite{GH}, according to which the vertices of any plane triangulation can be $2$-colored such that each vertex is adjacent to a vertex of each color.
(Here we do not go into details about why their conjecture would imply our claim; it involves a triangulation similar to the one that can be found in our main proof.) 

We can, however, suppose that no clause contains exactly $1$ literal, as in this case the formula could be easily simplified.
Moreover, we can also suppose that if a clause contains exactly $2$ literals, then there is no other clause that would contain the same two literals (with the same negations); e.g., $(x_i \vee x_j) \wedge (x_i \vee x_j \vee x_k)$ is equivalent to $(x_i \vee x_j)$.
Because of this, and the properties of the embedding, we can suppose the following.

\begin{obs}\label{obs}
For any two literals there are at most two clauses that contain both of them, and if two such clauses exist, both of them also contains a third literal.
\end{obs}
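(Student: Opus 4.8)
The plan is to combine the monotone structure with the planarity of the embedding, after first stripping away the trivial cases. If the two literals have opposite signs (in particular if they come from the same variable), then no clause can contain both, since every clause is monotone, and there is nothing to prove; and two copies of one literal do not count as two literals. Hence I may assume the two literals sit on distinct variables and share a sign, and after reflecting the picture across the line $y=0$ I may take both to be unnegated, with intervals $x_i$ (the left one) and $x_j$ (the right one). Every clause containing both then lies in the halfplane $y>0$ and is joined to $x_i$ and to $x_j$ by vertical segments avoiding all other rectangles. I would picture each such clause $C$ as an \emph{arch}: a left segment down to its foot $p\in x_i$, the rectangle on top, and a right segment down to its foot $q\in x_j$, so that together with the axis interval $[p,q]$ it bounds a region $R_C$.

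First I would show that these arches are nested. The foot interval $[p,q]$ runs from a point of $x_i$ to a point of $x_j$, so it contains the whole gap separating $x_i$ from $x_j$; thus any two such intervals overlap, and since the embedding has no crossings, two overlapping arches cannot interleave and must be nested. The clauses containing both $x_i$ and $x_j$ are therefore linearly ordered by containment, say $C_1\supset C_2\supset\cdots$, with the feet on $x_i$ moving rightward and the feet on $x_j$ moving leftward as one goes inward.

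The heart of the proof is the bound of two. Suppose there were three clauses $C_1\supset C_2\supset C_3$ containing both literals, and consider the middle arch $C_2$, which contains a third literal $z$ (as I note in the next paragraph). It reaches $z$ by a vertical segment dropping from $C_2$'s rectangle to some point $r$ of the axis. This $r$ cannot lie left of $x_i$ nor right of $x_j$: the region $R_{C_1}$ is bounded on its sides by the arms of $C_1$, and $C_2$ lies inside $R_{C_1}$, so a vertical segment descending from $C_2$ cannot pass the feet of $C_1$. Hence $r$ lies in the gap between $x_i$ and $x_j$. But the feet of the innermost clause $C_3$ are the rightmost foot on $x_i$ and the leftmost foot on $x_j$, so the whole gap lies horizontally between them; since $C_3$'s rectangle must span from above its $x_i$-foot to above its $x_j$-foot in order to send down both of its vertical segments, that rectangle lies directly above $r$, and below $C_2$'s rectangle because $C_3\subset R_{C_2}$. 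The segment from $C_2$ down to $z$ would then pass through the rectangle of $C_3$, which the embedding forbids. This contradiction shows that at most two clauses contain both $x_i$ and $x_j$.

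It remains to see that if two clauses contain both $x_i$ and $x_j$, then each has a third literal. Neither can be the two-literal clause $(x_i\vee x_j)$: we arranged in the pre-processing that whenever a clause has exactly two literals, no other clause contains those same two literals, so the presence of a second clause on $\{x_i,x_j\}$ excludes the bare clause. Hence each of the at most two clauses containing both literals carries a third literal, which is exactly the claim (and in particular supplies the third literal $z$ of $C_2$ used above). I expect the main obstacle to be the geometric bookkeeping in the bound of two, specifically making precise that the middle arch's third segment is trapped between the side-arms of the outer clause and the rectangle of the inner clause; the rest reduces to monotonicity and the two pre-processing simplifications.
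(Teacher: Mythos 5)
Your proposal is correct and follows the route the paper intends: the ``third literal'' part comes from the preprocessing of two-literal clauses, and the ``at most two'' part from the planarity of the rectilinear embedding (nested arches forcing the middle clause's third vertical segment into the gap, where the innermost rectangle blocks it). The paper simply asserts the observation from these two ingredients without spelling out the geometry, so your write-up is a faithful, more detailed elaboration of the same argument rather than a different approach.
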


\bigskip

We will also use the following technical lemma about triangulating plane graphs.

\begin{lem}\label{lem:tri} Suppose that $G=(V,E)$ is a plane graph and $Z\subset V$ is a subset of its vertices such that
\begin{enumerate}[label=(\arabic*)]
\item every vertex $z\in Z$ has at least three neighbors,
\item for a vertex $z\in Z$ and two of the edges adjacent to it, $(z,v)$ and $(z,v')$, that follow each other in the rotation around $z$ in the embedding of $G$, either $(v,v')\notin E$ or $(v,v',z)$ forms a triangular face,
\item if $z,z'\in Z$ are neighbors, then they have exactly two common neighbors, $v,v'\in V$, and $(z,z',v)$ and $(z,z',v')$ are two triangular faces of the embedding,
\item if two vertices $v,v'\in V\setminus Z$ have two common neighbors from $Z$, then they have exactly two common neighbors from $Z$, $z$ and $z'$, and $(v,z,v',z')$ is a face of the embedding of $G$,
\end{enumerate}

\indent
then $G$ can be triangulated by adding only edges that are not adjacent to any vertex in $Z$.
\end{lem}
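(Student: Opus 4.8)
The plan is to deal with the vertices of $Z$ first, ``shaving off'' each $z\in Z$ until it is surrounded only by triangular faces, and only afterwards to triangulate the part of the graph disjoint from $Z$. Fix $z\in Z$ and list its neighbors $v_1,\dots,v_k$ in the cyclic order of the rotation at $z$; by~(1) we have $k\ge 3$. Between the consecutive edges $(z,v_i)$ and $(z,v_{i+1})$ lies a single face $F_i$. If $(v_i,v_{i+1})\in E$, then by~(2) the face $F_i$ is already the triangle $(z,v_i,v_{i+1})$ and I do nothing; otherwise I add the chord $(v_i,v_{i+1})$, drawn inside $F_i$ next to $z$, cutting the triangle $(z,v_i,v_{i+1})$ off from $F_i$. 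Carrying this out in every wedge turns the star of $z$ into a wheel whose rim is the cycle $v_1v_2\cdots v_k$; as $k\ge 3$ and the $v_i$ are distinct, this rim is a genuine simple cycle, so all faces at $z$ become triangles. Note that each chord joins two vertices of $V\setminus Z$: were $v_i\in Z$, condition~(3) would flank the edge $(z,v_i)$ with two triangular faces, forcing $F_i$ to be a triangle and hence $(v_i,v_{i+1})\in E$, contrary to the case at hand.

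The crux is to verify that doing all these shavings at once leaves a \emph{simple} plane graph $G'$. Planarity is clear, since every chord stays inside the face whose corner it removes. To exclude a repeated edge I argue as follows. Two chords coming from the same $z$ are distinct because consecutive pairs $\{v_i,v_{i+1}\}$ are. A chord from $z$ cannot duplicate an edge already present in its wedge, for by~(2) such an edge would have made $F_i$ a triangle, in which case no chord is added there. The delicate case is a chord $(v,v')$ demanded by two \emph{different} $z,z'\in Z$: then $v,v'$ are common neighbors of the distinct $Z$-vertices $z$ and $z'$, so by~(4) they have exactly those two common neighbors in $Z$ and $(v,z,v',z')$ is a face. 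Consequently the wedge of $z$ between $v,v'$ and the wedge of $z'$ between $v,v'$ are the \emph{same} quadrilateral face, and the single chord $(v,v')$ splits it into the triangles $(v,z,v')$ and $(v,z',v')$, shaving both corners at once; so $(v,v')$ is added only once. Finally~(3) settles adjacent $Z$-vertices: if $z,z'\in Z$ are adjacent, the edge $(z,z')$ already lies between the two triangles provided by~(3), so no chord is added across it. Hence $G'$ is a simple plane graph and no edge incident to $Z$ was added.

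By construction every face of $G'$ meeting $Z$ is a triangle, so each remaining non-triangular face $F$ has all of its boundary vertices in $V\setminus Z$; triangulating such faces can therefore only add edges avoiding $Z$. To do this while staying simple, I first collect the edges of $G'$ that join two boundary vertices of $F$ but are not sides of $F$. These lie outside $F$, and two of them cannot cross as diagonals of the polygon $F$ (otherwise, as in the quadrilateral case, one of the two edges would have to cross the other), so they form a non-crossing system of diagonals that subdivides $F$ into sub-polygons. Triangulating each sub-polygon by a fan from one of its vertices then triangulates $F$, and every diagonal so added lies strictly inside a single sub-polygon and hence differs from all the collected edges --- no multiple edge is created. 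Doing this face by face yields the required triangulation of $G$ in which every added edge avoids $Z$.

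I expect the second paragraph to be the real obstacle: ruling out a twice-created chord is exactly where the hypotheses earn their keep, with~(1) guaranteeing a true rim cycle, (2) preventing a duplicate inside one wedge, (3) disposing of adjacent $Z$-vertices, and~(4) --- the subtlest --- forcing the shared-quadrilateral picture that stops two distinct vertices of $Z$ from requesting the same chord.
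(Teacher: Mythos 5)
Your proof is correct and follows essentially the same route as the paper's: triangulate each wedge at a vertex of $Z$ by adding the chord $(v,v')$ near $z$, using (2) for existing edges, (3) for neighbors in $Z$, and (4) for the shared quadrilateral where two vertices of $Z$ would request the same chord, then triangulate the remaining ($Z$-free) faces arbitrarily. Your version is somewhat more explicit than the paper's about why no multiple edges arise, both in the shaving step and in the final triangulation of the $Z$-free faces, but the substance is identical.
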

\begin{proof}
We need to show that if for a vertex $z\in Z$ two of the edges adjacent to it, $(z,v)$ and $(z,v')$, follow each other in the embedding of $G$ in the rotation around $z$, then either $(v,v')\in E$ and $(v,v',z)$ forms a triangular face, or $(v,v')$ can be added as such.
This way the faces around each $z\in Z$ become triangulated and we can triangulate the rest of the graph arbitrarily.

We handle the following cases separately.
\begin{itemize}
\item If $(v,v')\in E$, then because of condition (2) $(v,v',z)$ forms a triangular face.
\item If $v$ or $v'$ is from $Z$, then because of condition (3) $(v,v')\in E$.
\item If $v$ and $v'$ have no other common neighbor from $Z$, then connect them by an edge in the vicinity of the curves of the edges $(v,z)$ and $(z,v')$.
\item If $v$ and $v'$ have another common neighbor from $Z$, then because of condition (4) they have exactly one, $z'\in Z$, and $(v,z,v',z')$ is a face of the embedding of $G$, thus we can divide it by adding the edge $(v,v')$.
\end{itemize}

By repeatedly applying the above, the only condition we could violate is condition (2) by adding the edge $(v,v')$ such that $(v,v',z)$ does not form a triangular face.
But we can add $(v,v')$ to $G$ only in the last two cases, when $v$ and $v'$ have a common neighbor from $Z$, and in each case $(v,v',z)$ forms a triangular face after adding $(v,v')$.
This finishes the proof of Lemma \ref{lem:tri}.
\end{proof}

\section{Proofs of Theorems \ref{thm:dom} and \ref{thm:pdom}}

\begin{proof}[Proof of Theorem \ref{thm:dom}]
The problem is trivially in {\sc NP}, we only have to prove its hardness.

Given an input $\Psi$ to the {\sc Planar Monotone 3-sat} problem on $n$ variables, we transform it into a plane triangulation $G$ such that $\gamma(G)\le n$ if and only if $\Psi$ is satisfiable.
(See Figure \ref{fig:domin1} for the basic graph $G$ obtained from $\Psi$ and Figure \ref{fig:domin3} for the plane triangulation.)

\begin{figure}
\includegraphics[width=\textwidth]{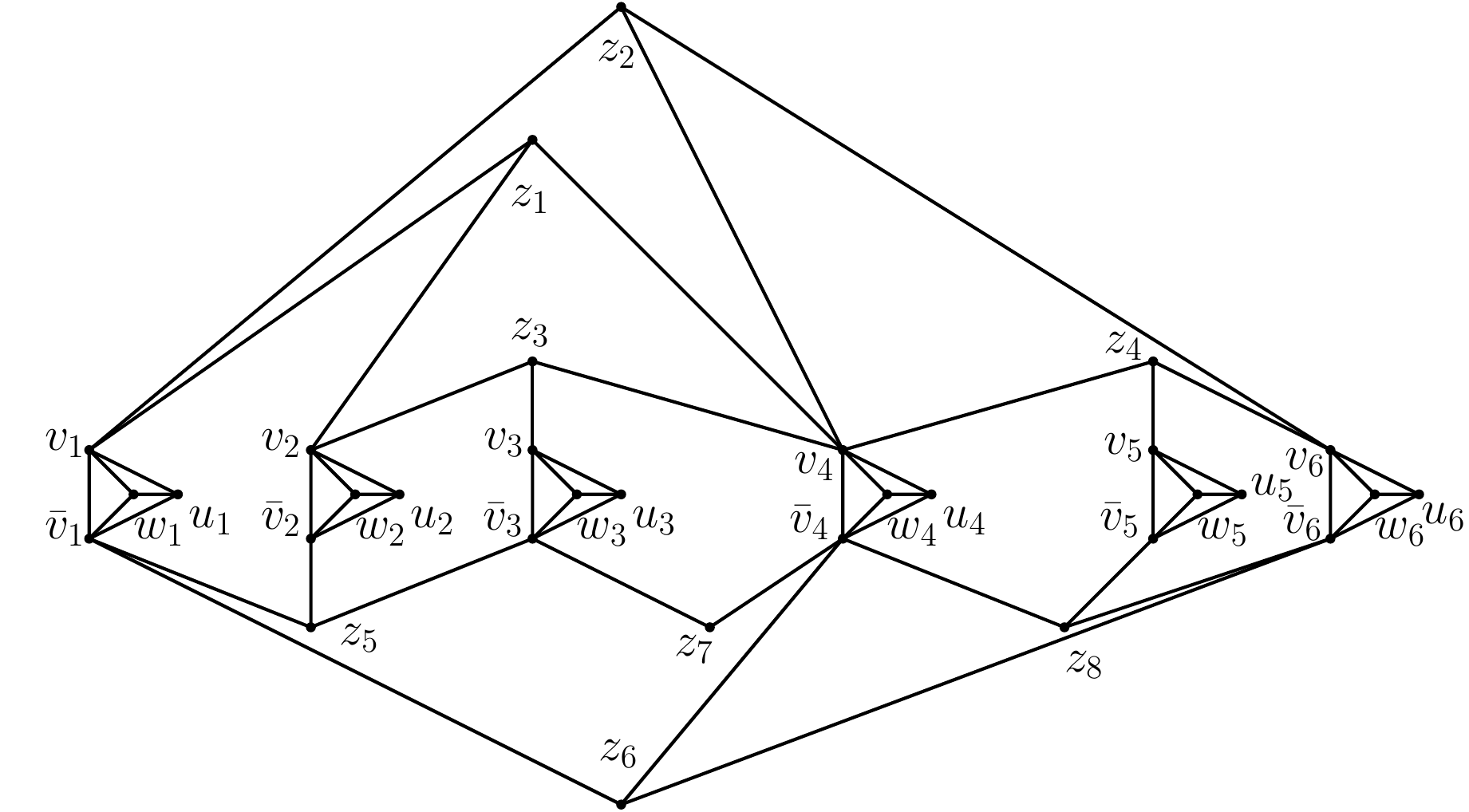}
\caption{Example of graph $G$ used for hardness of domination obtained from {\sc Planar Monotone 3-sat} input. A dominating set of size $6$: $\{v_4,\bar v_1,\bar v_2,\bar v_3,\bar v_5,\bar v_6\}$.}
\label{fig:domin1}
\end{figure}

For each variable $x_i$, $G$ will contain a $K_4$, whose vertices we denote by $v_i,\bar v_i, u_i, w_i$.
The vertex $w_i$ has no other neighbors, which already shows that $\gamma(G)\ge n$, as we must select a vertex from each $K_4$.

For each clause $C_h$ we introduce a vertex, $z_h$, that is connected only to one vertex for each literal it contains;
if $x_i\in C_h$, then we connect $z_h$ to $v_i$, while if $\bar x_i\in C_h$, then we connect $z_h$ to $\bar v_i$.

The graph obtained so-far is obviously planar, now we need the following bound on its domination number. 

\begin{claim} $\gamma(G)=n$ if and only if $\Psi$ is satisfiable.
\end{claim}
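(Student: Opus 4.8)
Since the excerpt has already observed that the private vertices $w_i$ force $\gamma(G)\ge n$, the plan is to prove the two implications relating satisfiability of $\Psi$ to the existence of a dominating set of size exactly $n$. For the easy direction, I would suppose $\Psi$ has a satisfying assignment and build $S$ by selecting, from the $K_4$ of each variable $x_i$, the vertex $v_i$ if $x_i$ is true and $\bar v_i$ if $x_i$ is false; this gives $|S|=n$. Because each $K_4$ is complete, the chosen vertex dominates all four of $v_i,\bar v_i,u_i,w_i$, so every $K_4$-vertex is covered. For a clause vertex $z_h$, the satisfying assignment provides a literal of $C_h$ that is true, and the corresponding vertex ($v_i$ or $\bar v_i$) lies in $S$ and is, by construction, adjacent to $z_h$; hence $z_h$ is dominated as well, $S$ is a dominating set, and $\gamma(G)\le n$.

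For the converse, let $S$ be a dominating set with $|S|=n$. First I would pin down the structure of $S$: the $n$ copies of $K_4$ are pairwise vertex-disjoint and each must contribute at least one vertex (to dominate its $w_i$), while the clause vertices $z_h$ lie in none of them; by counting, $S$ contains exactly one vertex of each $K_4$ and no $z_h$. The key normalization step is then to argue that we may assume this chosen vertex is always $v_i$ or $\bar v_i$. Indeed, the closed neighborhoods satisfy $N[u_i]=N[w_i]=\{v_i,\bar v_i,u_i,w_i\}\subseteq N[v_i]$, so replacing a chosen $u_i$ or $w_i$ by $v_i$ only enlarges the dominated set while keeping $|S|=n$; after these swaps $S$ meets each $K_4$ in exactly one of $v_i,\bar v_i$. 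I would then read off the assignment, setting $x_i$ true when $v_i\in S$ and false when $\bar v_i\in S$, and verify satisfiability: since $z_h\notin S$ it must have a neighbor in $S$, and its only neighbors are the literal-vertices of $C_h$, so such a neighbor lying in $S$ means precisely that the corresponding literal is true under the assignment, whence $C_h$ is satisfied.

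The argument is largely bookkeeping, and the one place that requires genuine care is the normalization step in the converse: one must confirm that exchanging $u_i$ or $w_i$ for $v_i$ never destroys the domination of any already-covered clause vertex. This follows from the neighborhood inclusion above, since $u_i$ and $w_i$ have no neighbors outside their own $K_4$ and hence dominate no $z_h$ to begin with, so removing them loses nothing while adding $v_i$ can only help. With that in hand both implications close, yielding $\gamma(G)=n$ if and only if $\Psi$ is satisfiable.
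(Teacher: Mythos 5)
Your proposal is correct and follows essentially the same route as the paper: the forward direction is identical, and in the converse both arguments use the disjointness of the $K_4$'s plus the private vertices $w_i$ to force exactly one chosen vertex per variable gadget and none among the $z_h$, then read off an assignment. The only cosmetic difference is that you normalize $S$ to contain only $v_i$ or $\bar v_i$ before defining the assignment, whereas the paper assigns arbitrary truth values to variables whose gadget contributed $u_i$ or $w_i$ and notes that $z_h$ must still be dominated by some literal vertex; these are interchangeable.
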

\begin{proof} Suppose that $\Psi$ is satisfiable and fix a satisfying assignment.
If $x_i$ is true, we can let $v_i\in S$, and if $x_i$ is false, we can let $\bar v_i\in S$.
This way we have picked a vertex from each $K_4$ corresponding to the variables and since the assignment satisfies $\Psi$, every vertex $z_h$ corresponding to a clause is also dominated.

Suppose that $\gamma(G)=n$ and fix a dominating set $S$ of size $n$.
As $w_i$ needs to be dominated for each $i$, $|S\cap \{v_i,\bar v_i, u_i, w_i\}|=1$.
If $v_i\in S$, we can let $x_i$ be true, if $\bar v_i\in S$, we can let $x_i$ be false, and otherwise we can choose its truth value arbitrarily.
This way each clause is satisfied, as the corresponding vertex $z_h$ had to be dominated.
\end{proof}

\begin{figure}
\includegraphics[width=\textwidth]{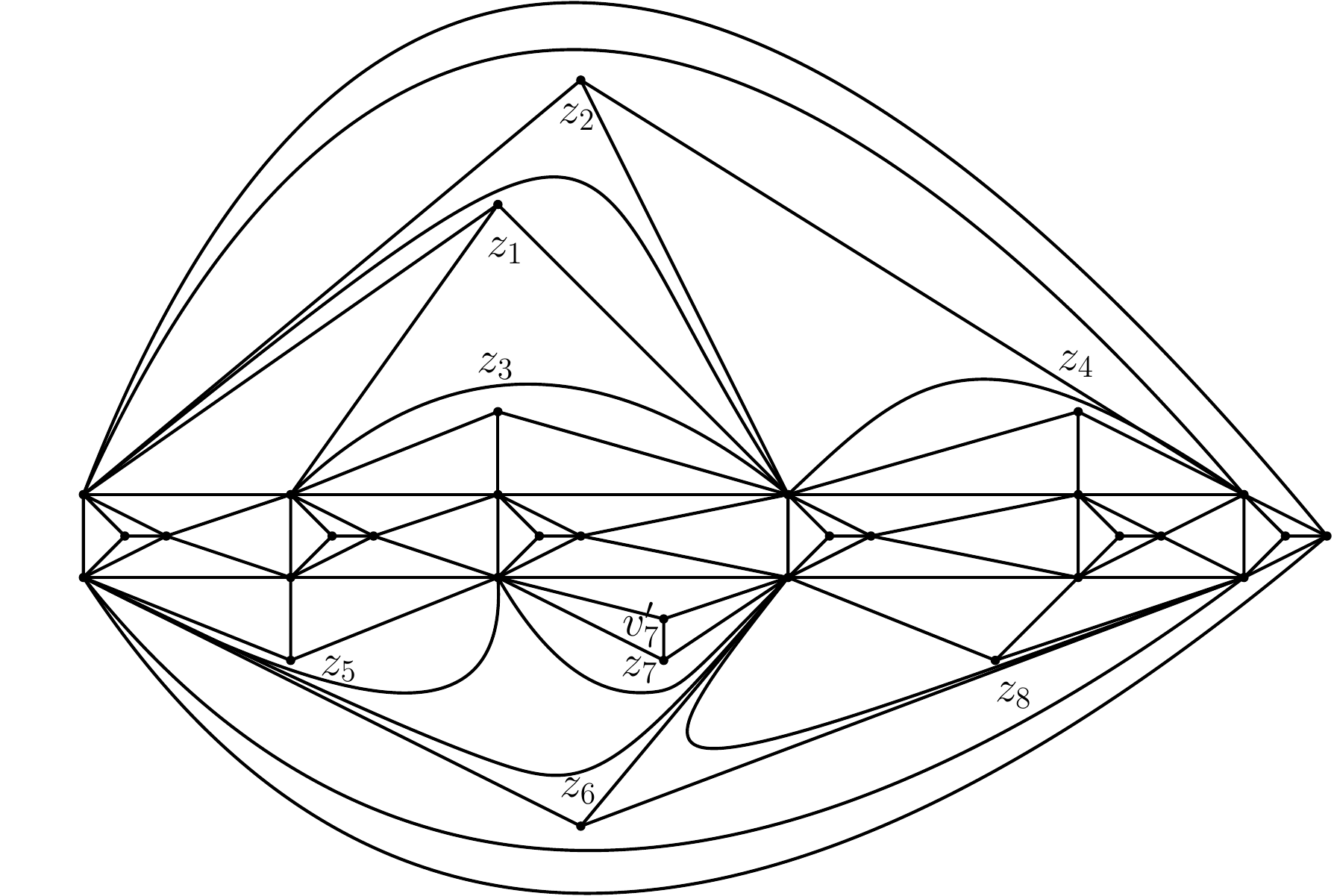}
\caption{Triangulation of $G$ (with vertex $v_7'$ added to the only clause with two variables).}
\label{fig:domin3}
\end{figure}

This already establishes the hardness of the problem for plane graphs; to finish the proof of Theorem \ref{thm:dom}, we only need to show that we can triangulate $G$ without introducing any new neighbors to the $z_h$ vertices.
If each clause of $\Psi$ contains exactly three literals, then this follows by taking the (not necessarily straight-line) ``natural embedding'' of $G$ obtained from the embedding of $\Psi$, and applying Lemma \ref{lem:tri} with $Z$ containing the $z_h$ vertices that correspond to the clauses (it is straight-forward to check that the conditions of Lemma \ref{lem:tri} hold using Observation \ref{obs}).

If $\Psi$ also contains clauses with only two literals, we need to introduce extra vertices to $G$ in the following way.
For each clause with two literals, e.g., $C_h=(x_i \vee x_j)$, we add one extra vertex, $v_h'$, that we connect to $x_i, x_j$ and $z_h$.
Note that this does not change the domination number of $G$, as $v_h'$ is connected to exactly the same vertices as $z_h$, and they are also connected to each other.
But now the conditions of Lemma \ref{lem:tri} hold with $Z$ containing the $z_h$ vertices, thus we can obtain a triangulation, finishing the proof of Theorem \ref{thm:dom}.
\end{proof}

\begin{proof}[Proof of Theorem \ref{thm:pdom}]
\begin{figure}
\includegraphics[width=\textwidth]{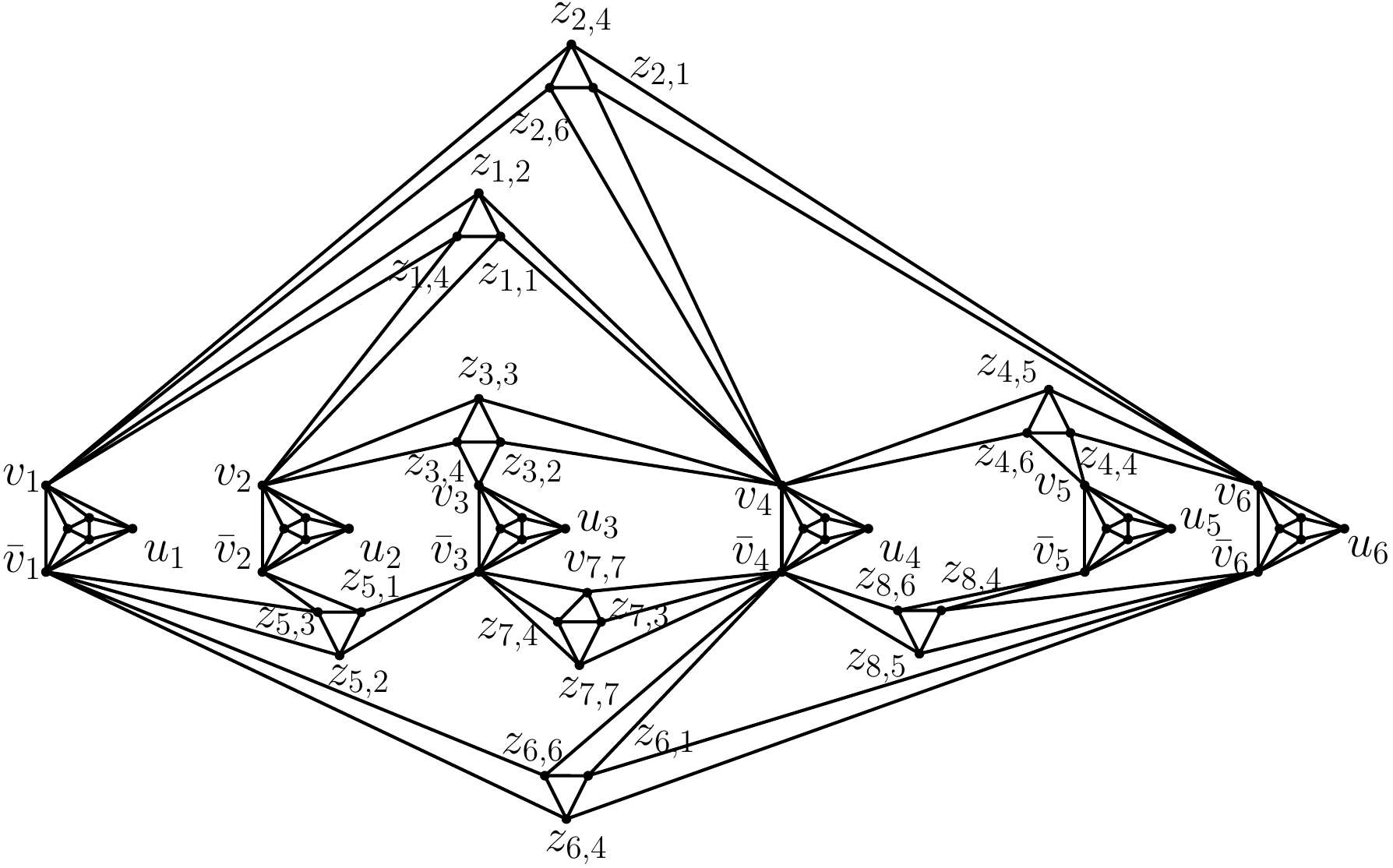}
\caption{Example of graph $G$ used for hardness of power domination obtained from {\sc Planar Monotone 3-sat} input. A power dominating set of size $6$: $\{v_4,\bar v_1,\bar v_2,\bar v_3,\bar v_5,\bar v_6\}$. This graph can be triangulated similarly as on Figure \ref{fig:domin3}.}
\label{fig:pdomin1}
\end{figure}
As in the case of Theorem \ref{thm:dom}, the problem is trivially in {\sc NP}, we only have to prove its hardness.

Given an input $\Psi$ to the {\sc Planar Monotone 3-sat} problem on $n$ variables, we transform it into a plane triangulation $G$ such that $\gamma_P(G)\le n$ if and only if $\Psi$ is satisfiable. (See Figure \ref{fig:pdomin1}.)

For each variable $x_i$, $G$ will contain six vertices, $v_i,\bar v_i, u_i, v_i',\bar v_i', u_i'$, such that they all have edges between them except $(v_i,v_i')$, $(\bar v_i,\bar v_i')$ and $(u_i,u_i')$.
The vertices $v_i',\bar v_i', u_i'$ have no other neighbors among the other vertices of the graph, thus their degrees are $4$.
This already shows that $\gamma(G)\ge n$, as we must select a vertex from each such sextuple\footnote{The six titles won by Barcelona in 2009 (Copa del Rey, La Liga, UEFA Champions League, Supercopa de Espa\~na, UEFA Super Cup and FIFA Club World Cup) have been described as a `sextuple'. This achievement, however, took place over the course of two different Spanish seasons, including a treble in the 2008-09 season. Despite occurring in two seasons, the six titles are still counted as a `sextuple' by many people, because the three added trophies (during the 2009-2010 season) were extra matches of the 2008-2009 treble and all six titles were won in the same calendar year. 
}, otherwise we could not propagate to $v_i',\bar v_i', u_i'$, as each of their neighbors is adjacent to at least two of them.
If, however, we choose any of $v_i,\bar v_i, u_i$ to our initial set $S$, we have $\{v_i,\bar v_i, u_i, v_i',\bar v_i', u_i'\}\subset \Gamma_1(\Gamma(S)) \subset \Gamma_P(S)$.

For each clause with three literals, e.g., $C_h=(x_i \vee x_j \vee x_k)$, we introduce three degree $4$ vertices, $z_{h,i},z_{h,j},z_{h,k}$, that are connected to each other and to two of the literals each; $z_{h,i}$ is connected to $v_j$ and $v_k$, $z_{h,j}$ is connected to $v_i$ and $v_k$, and $z_{h,k}$ is connected to $v_i$ and $v_j$.
(If $C_h$ contained negated literals, than instead of the $v_i, v_j, v_k$ we would use $\bar v_i, \bar v_j, \bar  v_k$.)
Notice that we must select at least one of $v_i, v_j, v_k,z_{h,i},z_{h,j},z_{h,k}$, otherwise we could not propagate to $z_{h,i},z_{h,j},z_{h,k}$, as each of their neighbors is adjacent to at least two of them.
If, however, we choose any of $v_i, v_j, v_k$ to our initial set $S$, we have $\{z_{h,i},z_{h,j},z_{h,k}\}\subset \Gamma_1(\Gamma(S)) \subset \Gamma_P(S)$.


For each clause with two literals, e.g., $C_h=(x_i \vee x_j)$, we introduce four degree $4$ vertices, $z_{h,i},z_{h,j},z_{h,h},v_{h,h}$, that are connected to each other and two additional vertices each: $z_{h,i}$ is connected to $v_j$ and $v_{h,h}$, $z_{h,j}$ is connected to $v_i$ and $v_{h,h}$, and $z_{h,h}$ and $v_{h,h}$ are connected to $v_i$ and $v_j$.
(If $C_h$ contained negated literals, than instead of the $v_i$ and $v_j$ we would use $\bar v_i$ and $\bar v_j$.)
Notice that we must select at least one of $v_i, v_j, z_{h,i},z_{h,j},z_{h,h}, v_{h,h}$, otherwise we could not propagate to $z_{h,i},z_{h,j},z_{h,k}$, as each of their neighbors is adjacent to at least two of them.
If, however, we choose any of $v_i$ or $v_j$ to our initial set $S$, we have $\{z_{h,i},z_{h,j},z_{h,h},v_{h,h}\}\subset \Gamma_1(\Gamma(S)) \subset \Gamma_P(S)$.

The graph obtained so-far is obviously planar, now we need the following bound on its domination number. 

\begin{claim} $\gamma_P(G)=n$ if and only if $\Psi$ is satisfiable.
\end{claim}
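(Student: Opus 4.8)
The plan is to mirror the structure of the domination claim and prove the two implications separately, using as black boxes the three propagation facts recorded during the construction: (i) putting any one of $v_i,\bar v_i,u_i$ into the initial set forces the whole variable sextuple into $\Gamma_1(\Gamma(S))$; (ii) for a three-literal clause, putting one of its literal vertices (say $v_i,v_j,v_k$) into the initial set forces $z_{h,i},z_{h,j},z_{h,k}$ into $\Gamma_1(\Gamma(S))$; and the analogous statement (iii) for two-literal clauses. The matching lower bound $\gamma_P(G)\ge n$ is already in hand, since every sextuple must meet $S$ in order to ever reach its degree-$4$ primed vertices.

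For the direction ``$\Psi$ satisfiable $\Rightarrow \gamma_P(G)\le n$'' I would fix a satisfying assignment and let $S$ consist of $v_i$ when $x_i$ is true and $\bar v_i$ when $x_i$ is false, one vertex per sextuple, so $|S|=n$. Fact (i) then covers every sextuple. For each clause, satisfiability produces a true literal, whose vertex ($v_i$ or $\bar v_i$) lies in $S$, and facts (ii)/(iii) cover that clause's vertices. Since $V$ is exactly the disjoint union of the sextuple vertices and the clause vertices, this yields $V=\Gamma_1(\Gamma(S))$ already after a single $\Gamma$ step followed by a single $\Gamma_1$ step, which is precisely the strengthened conclusion announced after Theorem~\ref{thm:pdom}; together with $\gamma_P(G)\ge n$ this gives $\gamma_P(G)=n$.

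For the converse I would begin with a power dominating set $S$ of size $n$. Because each of the $n$ sextuples must be met by $S$, a counting argument forces $S$ to contain exactly one vertex in each sextuple and no vertex elsewhere; in particular $S$ contains no clause vertex. I then read off an assignment, setting $x_i$ true if $v_i\in S$, false if $\bar v_i\in S$, and arbitrarily otherwise. For each clause the ``trap'' argument already recorded shows that $S$ must contain one of the clause's literal vertices or one of its clause vertices; since no clause vertex lies in $S$, a literal vertex of that clause is selected, and that literal is true under the assignment, so the clause is satisfied. Hence $\Psi$ is satisfiable, completing the equivalence.

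The step I expect to need the most care is confirming that the separately stated propagation facts genuinely apply to one common initial set $S$, rather than to isolated single-vertex sets. The clause-gadget propagation in facts (ii)/(iii) is not valid for, say, $S=\{v_i\}$ alone: pushing $z_{h,i}$ into the set via $\Gamma_1$ requires the neighbouring variable vertices $v_j,v_k$ to be present already after the first $\Gamma$ step, so that a clause vertex such as $z_{h,j}$ has $z_{h,i}$ as its \emph{unique} outside neighbour. This hypothesis is exactly what the chosen $S$ supplies, since $S$ meets every sextuple with a literal vertex and hence every $v_\cdot$ lies in $\Gamma(S)$; verifying this interlocking, so that the gadget propagations do not interfere and indeed combine to give $V=\Gamma_1(\Gamma(S))$, is the one place where the argument goes beyond a direct quotation of the facts already stated.
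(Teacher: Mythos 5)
Your proposal is correct and follows essentially the same route as the paper's proof: the same choice of $S$ from a satisfying assignment, the same counting argument forcing exactly one vertex per sextuple, and the same use of the gadget ``trap'' observations to read off a satisfying assignment from a power dominating set. The extra care you take in checking that the single-vertex propagation facts interlock for the full set $S$ (every literal vertex lying in $\Gamma(S)$ so that the clause vertices have a unique outside neighbour) is a detail the paper leaves implicit, but it does not change the argument.
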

\begin{proof} Suppose that $\Psi$ is satisfiable and fix a satisfying assignment.
If $x_i$ is true, we can let $v_i\in S$, and if $x_i$ is false, we can let $\bar v_i\in S$.
This way we have picked a vertex from each sextuple corresponding to the variables and since the assignment satisfies $\Psi$, every vertex corresponding to a clause is power dominated by $S$.

Suppose that $\gamma(G)=n$ and fix a power dominating set $S$ of size $n$.
As we need to pick a vertex from each sextuple for each $i$, $|S\cap \{v_i,\bar v_i, u_i, v_i',\bar v_i', u_i'\}|=1$.
If $v_i\in S$, we can let $x_i$ be true, if $\bar v_i\in S$, we can let $x_i$ be false, and otherwise we can choose its truth value arbitrarily.
This way each clause is satisfied, as the corresponding $z_{h,.}$ vertices had to be power dominated.
\end{proof}

This already establishes the hardness of the problem for plane graphs; to finish the proof of Theorem \ref{thm:pdom}, we only need to show that we can triangulate $G$ without introducing any new neighbors to the $z_h$ vertices.
This follows by taking the ``natural embedding'' of $G$ obtained from the embedding of $\Psi$, and applying Lemma \ref{lem:tri} with $Z$ containing the $z_{h,.}$ vertices that correspond to the clauses (it is straight-forward to check that the conditions of Lemma \ref{lem:tri} hold using Observation \ref{obs}).
\end{proof}

\subsubsection*{Acknowledgment}

I'm thankful to the organizers of the Workshop on Graph and Hypergraph Domination and its participants where this research was started, especially to Paul Dorbec for proposing the problem of the complexity of determining the power domination number in triangulated plane graphs and to Claire Pennarun for useful discussions, providing references and for calling my attention to the related problem about the domination number.

\end{document}